\newcommand{\lfam}{\mathscr{L}}
\newcommand{\border}{\texttt{\#}}
\newcommand{\cent}{\texttt{\textcent}}
\newcommand{\rtf}{\lfam_{rt}}
\newcommand{\func}[1]{\mathop{\mathrm{#1}}}
\newcommand{\comm}{\func{com}}
\newcommand{\mcomm}{\func{mcom}}
\newcommand{\scomm}{\func{scom}}
\newcommand{\ca}{\textrm{CA}}
\newcommand{\oca}{\textrm{OCA}}
\newcommand{\mcoca}[1]{\textrm{MC}(#1)\textrm{-OCA}}
\newcommand{\mcca}[1]{\textrm{MC}(#1)\textrm{-CA}}
\newcommand{\scoca}[1]{\textrm{SC}(#1)\textrm{-OCA}}
\newcommand{\scca}[1]{\textrm{SC}(#1)\textrm{-CA}}
\begin{document}

\title{Bounded Languages Meet Cellular Automata\\ with Sparse Communication}
\def\titlerunning{Bounded Languages Meet Cellular Automata with Sparse Communication}
\def\authorrunning{M.~Kutrib, A.~Malcher}
\author{Martin Kutrib \qquad Andreas Malcher
\institute{Institut f\"ur Informatik --
  Universit\"at Giessen\\
  Arndtstr.~2 -- 35392 Giessen -- Germany}
\email{\{kutrib,malcher\}@informatik.uni-giessen.de}
}

\maketitle

\begin{abstract}
Cellular automata are one-dimensional arrays of interconnected 
interacting finite automata. We investigate one of the
weakest classes, the real-time one-way cellular
automata, and impose an additional restriction on their
inter-cell communication by bounding the number 
of allowed uses of the links between cells. Moreover,
we consider the devices as acceptors for \emph{bounded}
languages in order to explore the borderline at which
non-trivial decidability problems of cellular automata classes
become decidable. It is shown that even devices with drastically
reduced communication, that is, each two neighboring cells may 
communicate only constantly often, accept bounded languages that
are not semilinear. If the number of communications is
at least logarithmic in the length of the input, several
problems are undecidable. The same result is obtained for
classes where the total number of communications during a computation
is linearly bounded.
\end{abstract}

\section{Introduction}

Cellular automata are linear arrays of identical copies of deterministic finite 
automata, where the single nodes, which are called cells, are homogeneously 
connected to both their immediate neighbors. They work synchronously at 
discrete time steps. In the general case, in every time step the state of 
each cell is communicated to its neighbors. That is, on the one hand the state 
is sent regardless of whether it is really required, and on the other hand, 
the number of bits sent is determined by the number of states.
Devices with bounded bandwidth of the inter-cell links are considered
in~\cite{Kutrib:2006:fcaricccc,
Umeo:2001:ltrcbi,
Umeo:2003:rtgpob,Worsch:2000ltlrcarc}.
In~\cite{Vollmar:1981:CAFN}
two-way cellular automata 
are considered where the number of proper state changes is bounded. 
There are strong relations to inter-cell communication. Roughly speaking, a cell
can remember the states received from its neighbors. As long as these do not change,
no communication is necessary.

Due to their temporal and structural restrictions 
real-time one-way cellular automata define one of the 
weakest classes of cellular automata. However, they are still
powerful enough to accept non-context-free (even non-semilinear) 
languages (see, e.\,g., the surveys~\cite{kutrib:2008:ca-cpv,kutrib:2009:calt}).
Moreover, almost all of the commonly investigated decidability questions
are known not to be semidecidable~\cite{Malcher:2002:dccadq}.
In order to explore the borderline at which non-trivial decidability 
problems become decidable, additional structural and computational
restrictions have been imposed.
Here, we investigate real-time one-way cellular automata where the 
communication is quantitatively measured by the number of uses of the links between
cells. Bounds on the sum of all communications of a computation, as well as
bounds on the maximal number of communications that may appear
between each two cells are considered. Reducing the communication
drastically, but still enough to have non-trivial devices, we
obtain systems where each two neighboring cells may 
communicate only constantly often, and systems where  
the total number of communications during a computation depends
linearly on the length of the input. However, it has been shown
in~\cite{Kutrib:2009:cssc:proc} that even these restrictions do not
lead to decidable properties.  

An approach often investigated and widely accepted is to consider a given
type of device for special purposes only, for example,
for the acceptance of languages having a certain structure or form.
{F}rom this point of view it is natural to start with unary 
languages (e.\,g.,~\cite{Book:1974:tlcc,
Chrobak:1986:FAUL,%
klein:2007:cdul,%
Mereghetti:2001:osbua,Pighizzini:2002:uloscjf}). For general real-time
one-way cellular automata it is known that they accept only regular unary
languages~\cite{Seidel:1979:LRSCA}. Since the proof is constructive, we
derive that the borderline in question has been crossed. So, we generalize
unary languages to bounded languages. For several devices it is known that
they accept non-semilinear languages in general, but only
semilinear bounded languages. Since for semilinear sets
several properties are decidable~\cite{Ginsburg:1966:MTCFL}, 
constructive proofs lead to decidable properties for these devices in 
connection with bounded languages 
\cite{csuhajvarju:1994:gs,ibarra:1970:sml,ibarra:1974:nslsbrmhpda,ibarra:1978:rbmcmdp}.  


\section{Definitions and preliminaries}\label{sect:def}

We denote the positive integers and zero $\{0,1,2,\dots\}$ by ${\mathbb{N}}$.
The empty word is denoted by~$\lambda$, the reversal of a word $w$ by $w^R$, 
and for the length of~$w$ we write~$|w|$. 
For the number of occurrences of a subword~$x$ in~$w$ we
use the notation~$|w|_x$.
We use~$\subseteq$ for inclusions and $\subset$ for strict inclusions.
A language $L$ over some alphabet $\{a_1,a_2,\dots,a_k\}$ is said to
be \emph{bounded}, if $L\subseteq  a^*_1a^*_2\cdots a^*_k$.

A cellular automaton is a linear array of
identical deterministic finite state machines, sometimes called cells.
Except for the leftmost cell and rightmost cell each one is connected 
to both its nearest neighbors.
We identify the cells by positive integers. The state transition depends on the
current state of each cell and on the information which is currently sent 
by its neighbors. The  information sent by a cell depends on its
current state and is determined by so-called communication functions.
The two outermost cells receive a boundary symbol on their free input lines
once during the first time step from the outside world. Subsequently, 
these input lines are never used again.
A formal definition is 

\begin{definition}
A \emph{cellular automaton} $(\ca)$ is a system
$\langle S,F,A,B,\border,b_l,b_r,\delta\rangle$, where 
$S$ is the finite, non\-emp\-ty set of \emph{cell states},
$F\subseteq S$ is the set of \emph{accepting states},
$A\subseteq S$ is the nonempty set of \emph{input symbols},
$B$ is the set of \emph{communication symbols},
$\border\notin B$ is the \emph{boundary symbol},
$b_l,b_r: S\to B\cup \{\bot\}$ are 
\emph{communication functions} which determine the information
\emph{to be sent} to the left and right neighbors, where 
$\bot$ means \emph{nothing to send}, and
$\delta:(B\cup\{\border,\bot\})\times S\times (B\cup\{\border,\bot\})
\to S$ is the \emph{local transition function}.  
\end{definition}

A \emph{configuration} of a cellular automaton 
$\langle S,F,A,B,\border,b_l,b_r,\delta\rangle$
at time $t\geq 0$ is a
description of its global state, which is actually a mapping
$c_t:[1,\dots,n] \to S$, for $n\geq 1$.
The operation starts at time 0 in a so-called \emph{initial configuration}. 
For a given input $w=a_1\cdots a_n\in A^+$ we set 
$c_{0,w}(i)=a_i$, for \hbox{$1\leq i\leq n$}. During the course of 
its computation a $\ca$ steps through a sequence of configurations, 
whereby successor configurations are computed
according to the global transition function~$\Delta$:
Let $c_t$, $t\geq 0$, be a configuration. Then its successor
configuration $c_{t+1}=\Delta(c_t)$ is as follows.
For $2\leq i\leq n-1$,
$$
c_{t+1}(i) =\delta(b_r(c_t(i-1)), c_t(i), b_l(c_t(i+1))),
$$
and for the leftmost and rightmost
cell we set  
\begin{align*}
c_{1}(1) &= \delta(\border, c_0(1),b_l(c_0(2))),\\
c_{t+1}(1) &= \delta(\bot, c_t(1),b_l(c_t(2))), \mbox{ for $t\geq 1$, and}\\
c_{1}(n) &= \delta(b_r(c_0(n-1)), c_0(n),\border),\\
c_{t+1}(n) &= \delta(b_r(c_t(n-1)), c_t(n),\bot), \mbox{ for $t\geq 1$}.
\end{align*}
Thus, the global transition function~$\Delta$ is induced by $\delta$.

\begin{figure}[b]
\centering
\includegraphics[scale=1]{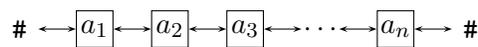}
\caption{A two-way cellular automaton.}
\end{figure}

An input $w$ is accepted by a $\ca$ $\mathcal{M}$ if at some time $i$
during the course of its computation the leftmost cell enters an accepting
state. The \emph{language accepted by $\mathcal{M}$} is denoted by
$L(\mathcal{M})$. Let $t:\mathbb{N}\to\mathbb{N}$, 
$t(n)\geq n$, be a mapping. 
If all $w\in L(\mathcal{M})$
are accepted with at most $t(|w|)$ time steps, then $\mathcal{M}$ is 
said to be of time complexity $t$. 

An important subclass of cellular automata are the so-called
\emph{one-way cellular automata} ($\oca$), where the flow of information
is restricted to one way from right to left. For a formal definition it
suffices to require that $b_r$ maps all states to $\bot$, and that
the leftmost cell does not receive the boundary symbol during the first time
step.

In the following we study the impact of communication in cellular automata.
Communication is measured by the number of uses of the links between cells.
It is understood that whenever a communication symbol not equal to 
$\bot$ is sent, a communication takes place. Here we do not distinguish
whether either or both neighboring cells use the link. More precisely,
the number of communications between cell $i$ and cell $i+1$
up to time step $t$ is defined by 
$$
\comm(i,t) = \left| \{\,j \mid 0\leq j < t \mbox{ and }
(b_r(c_j(i)) \ne \bot \mbox{ or } 
b_l(c_j(i+1)) \ne \bot)\,\}\right|.
$$
For computations we now distinguish the maximal number of communications
between two cells and the total number of communications.
Let $c_0,c_1,\dots, c_{t(|w|)}$ be the sequence of configurations
computed on input $w$ by some cellular automaton with time complexity 
$t(n)$, that is, the \emph{computation on $w$}. Then we define
\begin{eqnarray*}
\mcomm(w) &=& \max\{\, \comm(i,t(|w|)) \mid 1\leq i\leq |w|-1\,\}
\mbox{ and}\\
\scomm(w) &=& \sum_{i=1}^{|w|-1} \comm(i,t(|w|)).
\end{eqnarray*}
Let $f:\mathbb{N}\to\mathbb{N}$ be a mapping. 
If all $w\in L(\mathcal{M})$
are accepted with computations where $\mcomm(w) \leq f(|w|)$,
then $\mathcal{M}$ is said to be \emph{max communication bounded by~$f$.}
Similarly, if all $w\in L(\mathcal{M})$
are accepted with computations where \mbox{$\scomm(w) \leq f(|w|)$,}
then $\mathcal{M}$ is said to be \emph{sum communication bounded by $f$.}
In general, it is not expected to obtain tight bounds on the exact
number of communications but rather tight bounds up to a constant multiplicative
factor. For the sake of readability
we denote the class of $\ca$s that are max communication bounded by 
some function $g\in O(f)$ by $\mcca{f}$, where it is understood
that~$f$ gives the order of magnitude. Corresponding notation
is used for $\oca$s and sum communication
bounded $\ca$s and $\oca$s. ($\scca{f}$ and $\scoca{f}$).
The family of all languages which are accepted by some 
device $X$ with time complexity~$t$ is denoted by 
$\lfam_t(X)$.
In the sequel we are particularly interested in fast computations
and call the time complexity $t(n) = n$ \emph{real time}
and write $\rtf(X)$.

\section{Computational capacity}
\label{sec:comcapacity}

It has been shown in~\cite{Kutrib:2009:cssc:proc} 
that the family $\mcoca{1}$ contains 
the non-context-free languages
$$\{\,a_1^na_2^n\cdots a_k^n \mid n \geq 1\,\} \mbox{ and }
\{\,a^nb^mc^nd^m \mid n,m \ge 1\,\},$$ as well as the languages
$\{\,a^nw \mid n \ge 1 \wedge w \in (b^*c^*)^kb^* \wedge |w|_b=n\,\}$,
for all constants $k\geq 0$. All of these languages are either
semilinear or non-bounded. But in contrast to many other 
computational devices, for example certain multi-head finite
automata, parallel communicating finite automata, and certain parallel communicating
grammar systems, $\mcoca{1}$s can accept non-semilinear bounded languages.

\begin{example}\label{exmpl:sqrt}
The language $L_1=\{\,a^n b^{n+\lfloor \sqrt{n} \rfloor} \mid n \ge 1\,\}$
belongs to the family $\rtf(\mcoca{1})$.

In~\cite{Mazoyer:1994:SODCA} a $\ca$ is constructed such that 
its cell $n$ enters a designated state exactly at time step
\hbox{$2n\!+\!\lfloor \!\sqrt{n} \rfloor$}, and at most $n$ cells are used for the computation. 
In fact, the $\ca$ constructed is actually an $\oca$.
Additionally, each cell performs only a finite
number of communication steps. Thus, the $\ca$ constructed 
is an $\mcoca{1}$.

An $\mcoca{1}$ accepting $L_1$ implements the above construction 
on the \mbox{$a$-cells} of the input $a^nb^m$. 
Thus, the leftmost cell enters the designated state~$q$ at time step 
$2n+\lfloor \sqrt{n} \rfloor$. Additionally, in the rightmost cell a signal $s$
with maximum speed is sent to the left. When this signal arrives in an $a$-cell
exactly at a time step at which the cell would enter the designated state $q$,
the cell changes to an accepting state instead. 
So, if $m=n+\lfloor \sqrt{n} \rfloor$, then $s$ arrives at time 
$2n+\lfloor \sqrt{n} \rfloor$ at the leftmost cell and the input is accepted.
In all other cases the input is rejected. Clearly, the $\oca$ constructed is 
an $\mcoca{1}$.
\hfill$\diamond$
\end{example}

\begin{example}\label{exmpl:2to:the:n}
The language $L_2=\{\,a^{2^n} b^{n} c^{2^n+n} \mid n \ge 1\,\}$
belongs to the family $\rtf(\mcoca{1})$.

The rough idea of the construction is sketched as follows. We first describe
a real-time $\mcca{1}$ accepting $\{\,b^n a^{2^n} \mid n \ge 1\,\}$. Then this
two-way real-time $\mcca{1}$ is simulated by a one-way linear-time $\mcoca{1}$
accepting the reversal language $\{\,a^{2^n} b^n \mid n \ge 1\,\}$
in time $2 \cdot (n+2^n)$. The time additionally needed is provided by
adding the suffix $c^{2^n+n}$ to the input. Finally, the correct length
of the suffix is checked.

In more detail, we first consider the construction of a signal with speed $2^n$
given in~\cite{Mazoyer:1994:SODCA}. There a $\ca$ is described whose cell $n$ 
enters a designated state exactly at time step $2^n$, where at most $2^n$ 
cells are used for the computation. The $\ca$ constructed is in fact
an $\mcca{1}$. Similar to the construction in Example~\ref{exmpl:sqrt} 
we implement the construction of the signal $2^n$ on the $b$-cells, and 
the rightmost \mbox{$a$-cell} sends a signal $s$ with maximum speed
to the left. We know that the rightmost $b$-cell enters a designated state~$q$ exactly
at time step $2^n$ if the input is $b^na^m$. Moreover, signal $s$
arrives at time step $m+1$ in cell~$n$. If $m=2^n$, then cell $n$ has entered
the state~$q$ exactly one time step before, and now changes to an accepting
state which
is sent with maximum speed to the leftmost cell. In all other cases the input
is rejected. Altogether, we derive that $\{\,b^n a^{2^n} \mid n \ge 1\,\} \in 
\rtf(\mcca{1})$.

Next, we want to accept $L_2$ by some real-time $\mcoca{1}$. To this end, 
we utilize the fact that the reversal of every language accepted by some 
real-time $\ca$ can be accepted in twice the time by some 
$\oca$~\cite{Choffrut:1984:RTCATA,kutrib:2008:ca-cpv,Umeo:1982:DOWSTWRTCARP}. 
The essence of the proof is that each cell of the one-way device
collects the states of its both neighbors to the right in an additional time
step. With this information it can simulate the behavior of its immediate 
neighbor to the right in the two-way device.
In this way, $n$ steps of the $\ca$ can be simulated in $2n$ steps by the 
$\oca$ on reversed input. It follows that the resulting $\oca$ is an
$\mcoca{1}$ if the given $\ca$ was an $\mcca{1}$. In particular, we obtain 
that $\{\,a^{2^n} b^n \mid n \ge 1\,\}$ can be accepted by some $\mcoca{1}$ 
in time $2 \cdot (n+2^n)$. Thus, the simulation can be performed
on input $a^{2^n} b^{n} c^{2^n+n}$ in real-time.

Finally, the number of $c$s remains to be checked. To this end,
we consider the already mentioned language $\{\,a^nb^n \mid n \ge 1\,\}$ 
which belongs to the family $\rtf(\mcoca{1})$~\cite{Kutrib:2009:cssc:proc}.
Here we match the number of $a$s and $b$s against the number of $c$s on an 
additional track. So, we obtain the desired $\mcoca{1}$.
\hfill$\diamond$
\end{example}

%

\section{Decidability questions}\label{sect:decidability}

This section is devoted to decidability problems. In fact, the results
show undecidability of various questions  
for real-time $\mcoca{\log n}$s and $\scoca{n}$s accepting bounded languages. 
First we show that emptiness is undecidable for real-time $\mcoca{\log n}$s
and $\scoca{n}$s  
accepting bounded languages by reduction 
from Hilbert's tenth problem which is known to be undecidable. The problem is
to decide whether a given polynomial
$p(x_1,\ldots,x_n)$ with integer coefficients has an integral root. That is, to
decide whether there are integers $\alpha_1,
\ldots, \alpha_n$ such that $p(\alpha_1,\ldots,\alpha_n)=0$. 
In~\cite{ibarra:1978:rbmcmdp} Hilbert's tenth problem was used to show
that emptiness is undecidable for certain multi-counter machines.
As is remarked in \cite{ibarra:1978:rbmcmdp}, it is sufficient to restrict the variables 
$x_1, \ldots, x_n$ to take non-negative integers only. If $p(x_1,\ldots,x_n)$ 
contains a constant summand, then we may assume that it has a negative sign.
Otherwise, $p(x_1,\ldots,x_n)$ is multiplied by $-1$.
Then, such a polynomial has the following form:
$
p(x_1,\ldots,x_n)=t_1(x_1,\ldots,x_n)+\cdots+t_r(x_1,\ldots,x_n)
$,
where each $t_j(x_1,\ldots,x_n)$ ($1 \le j \le r$) is a term of the form
$
s_jx_1^{i_{j,1}} \ldots x_n^{i_{j,n}}
$
with $s_j \in \{+1,-1\}$ and $i_{j_1},\ldots,i_{j_n} \ge 0$.
It should be remarked that some terms~$t_j$ may be equal.
Additionally, we may assume that the summands are ordered according to their sign, 
i.\,e., there exists $1 \le p \le r$ such that $s_1=\cdots=s_p=1$ and 
$s_{p+1}=\cdots=s_r=-1$. Moreover, constant terms occur only at the
end of the sum. I.\,e., $t_r=\cdots=t_{r-c+1}=-1$, if $p$ contains $c>0$ constant
terms. 

We first look at the positive terms~$t_j$, $1 \le j \le p$, and define 
languages $L(t_j)$ as follows.
$$
L(t_j)=\{\,b_1^{\alpha_1} \cdots b_n^{\alpha_n} 
c_1^{\alpha_1^{i_{j,1}}} \cdots c_n^{\alpha_n^{i_{j,n}}} 
d_1^{2^n \cdot \alpha_1^{i_{j,1}} \cdots \alpha_n^{i_{j,n}}}\cent
\mid \alpha_1,\ldots,\alpha_n \ge 0\,\}
$$
For the negative, non-constant terms $t_j$ with $p+1 \le j \le r$ the
definition of~$L(t_j)$ is identical except for the fact that each symbol $d_1$ 
is replaced by some symbol~$d_2$. For each negative, constant term $t_j$, 
we define $L(t_j)=\{d_2^{2^n}\cent\}$.
Since $n$ is a constant depending on the given polynomial $p$, we can observe 
that each $L(t_j)$ is a bounded language.

\begin{lemma}\label{lemma:ltj} 
For $1 \le j \le r$, the language $L(t_j)$ belongs to the family $\rtf(\ca)$.
\end{lemma}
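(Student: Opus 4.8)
The plan is to construct a real-time $\ca$ for each $L(t_j)$ by composing a small number of standard cellular-automaton building blocks, each working on its own track. Recall that the input has the shape $b_1^{\alpha_1}\cdots b_n^{\alpha_n} c_1^{\beta_1}\cdots c_n^{\beta_n} d_1^{\gamma}\cent$ (with $d_2$ in place of $d_1$ for negative terms), and we must accept exactly when $\beta_\ell = \alpha_\ell^{i_{j,\ell}}$ for every $\ell$ and $\gamma = 2^n\prod_\ell \alpha_\ell^{i_{j,\ell}}$. Since $n$ and the exponents $i_{j,\ell}$ are constants fixed by the polynomial, all arithmetic has constant "degree," which is what makes a $\ca$ (even real-time) enough.

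First I would handle the verification $\beta_\ell = \alpha_\ell^{i_{j,\ell}}$. For a fixed constant exponent $i$, checking that one block of symbols has length equal to the $i$-th power of the length of another block is a routine real-time $\ca$ task: one iterates the "multiply the stored value by $\alpha_\ell$" operation $i$ times, and multiplication of block lengths by a bounded factor — more precisely, comparing $|u|\cdot|v|$-type relations — is classical for two-way $\ca$ working in real time (e.g.\ the standard constructions for $\{a^nb^{n^2}\}$ and its generalizations). I would run $n$ such checks in parallel on separate tracks, each check confined to the relevant $b_\ell$- and $c_\ell$-regions, with the boundaries between regions detected by the change of input symbol. Next I would handle the product $\prod_\ell \alpha_\ell^{i_{j,\ell}}$ and the factor $2^n$: having the powers $\alpha_\ell^{i_{j,\ell}}$ already materialized as the lengths of the $c_\ell$-blocks, I multiply them together successively (again, bounded-degree multiplication of adjacent block lengths is a standard real-time $\ca$ primitive), then multiply by the fixed constant $2^n$, and finally compare the result against the length of the $d$-block. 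The terminating $\cent$ marks the right end and lets the rightmost cell know the $d$-region is complete. The degenerate cases — some $i_{j,\ell}=0$, so $\alpha_\ell^{0}=1$ regardless of $\alpha_\ell$, or $r$-many equal terms, or the constant term giving $L(t_j)=\{d_2^{2^n}\cent\}$ — are all handled by hard-wiring the appropriate constants into the transition function.

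The one genuinely delicate point is \emph{timing}: every subcomputation must finish, and its verdict must reach the leftmost cell, within $n' = |w|$ steps, where $|w| = \sum\alpha_\ell + \sum\alpha_\ell^{i_{j,\ell}} + 2^n\prod\alpha_\ell^{i_{j,\ell}} + 1$. Here we are rescued by the shape of the language: the $d$-block has been deliberately inflated by the factor $2^n$ precisely so that $|w|$ is large enough to absorb the time overhead of the iterated multiplications and of shuttling signals back and forth across the $b$- and $c$-regions. So the main obstacle is really a bookkeeping argument: I would bound the time of each primitive (each is linear in the length of the region it works on, with a constant depending only on the fixed exponents), observe that all regions have length $O(|w|)$ and in fact that the $d$-region alone already has length $\geq 2^n \prod \alpha_\ell^{i_{j,\ell}}$, and conclude that suitable (standard) speed-ups and the generous suffix make everything fit into real time — exactly the style of argument already used in Examples~\ref{exmpl:sqrt} and~\ref{exmpl:2to:the:n}. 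Combining the $n+O(1)$ tracks by a product construction and accepting iff all of them report success then yields the desired real-time $\ca$.
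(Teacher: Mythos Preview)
Your outline is correct and follows the same decomposition as the paper: check the input format, check each power relation $\beta_\ell=\alpha_\ell^{\,i_{j,\ell}}$, and check the product relation for the $d$-block, all on parallel tracks. The paper is more concrete in two places. For the power checks it invokes the Mazoyer--Terrier construction of a $\ca$ whose leftmost cell enters a designated state exactly at times $x^k$ ($x\ge 1$), implemented on each $c_m$-block, with signals sent left to mark the $b_m$'s. For the product it gives a specific nested-bouncing-signal mechanism: $s'_n$ bounces freely in the $c_n$-block, and each time $s'_{m+1}$ reaches its left end it advances $s'_m$ by one cell; hence $s'_1$ completes one full back-and-forth cycle in exactly $\prod_m 2\beta_m = 2^n\prod_m\beta_m$ steps, which is then compared against the length of the $d$-block by a single leftward signal $s_d$. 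So the factor $2^n$ is not, as you surmise, padding to buy timing slack --- it is the \emph{exact} output of this product construction (one factor of $2$ per ``back and forth''), and $L(t_j)$ was defined with the $2^n$ precisely to match it. Your ``multiply successively, then multiply by the constant $2^n$'' would also work in principle, but leaves unspecified where the intermediate products live; the paper's construction sidesteps that by realising the product as a time duration rather than a spatial length.
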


\begin{proof}
If $t_j$ is a constant term, then $L(t_j)$ is a regular language and belongs to 
$\rtf(\ca)$.
Otherwise, $L(t_j)$ can be accepted by a real-time CA as follows. 
We may assume that the input is correctly formatted, since this can be checked 
with some leftward signal starting in the rightmost cell.
 
The first task is to check for $1 \le m \le n$ that the number of symbols 
$c_m$ is equal to the number of symbols $b_m$ to the power $i_{j,m}$, that is,
the number of symbols $c_m$ is equal to $\alpha_m^{i_{j,m}}$. 
In~\cite{Mazoyer:1994:SODCA} for every $k \geq 2$, a two-way $\ca$ is 
constructed whose leftmost cell enters a designated state $q_m$ exactly 
at every time step $x^k$, for $x\geq 1$. The $\ca$s do not use more than $x$ 
cells. 

Since $i_{j,m}$ is a constant, we can implement this construction for
$k={i_{j,m}}$ on the $c_m$-cells, for all $1 \le m \le n$.
Whenever the leftmost cell of the block consisting of $c_m$-cells, 
called $c_m$-block, enters the designated state $q_m$, a signal is 
sent to the left which marks one symbol $b_m$. Additionally, the 
rightmost cell of the $c_m$-block sends a signal $s_m$ with maximum 
speed to the left. When $s_m$ arrives in the leftmost cell of the $c_m$-block,
it checks whether the cell would enter the state $q_m$ at that time step. 
Moreover, the signal $s_m$ checks that all symbols $b_m$ have been marked
and no signal failed to mark a $b_m$ since there are too few of them. The
latter can be observed and remembered by the cell carrying the leftmost
$b_m$.

The second task is to check that the number of symbols $d_1$ is equal
to \mbox{$2^n \cdot \alpha_1^{i_{j,1}} \cdots \alpha_n^{i_{j,n}}$.} 
We next describe the construction and remark that the construction
for symbols $d_2$ is identical. 
The principal idea is as follows: 
In the rightmost cell of a $c_m$-block a signal $s'_m$ is set up that 
moves through the block back and forth. The signal $s'_n$ moves with 
maximum speed. The signal $s'_m$ through some other
$c_m$-block is stepped by the arrival of the signal $s'_{m+1}$ at 
its leftmost cell. Whenever this happens,
an auxiliary signal is sent from the rightmost $c_m$-cell to the left
that moves the signal $s'_m$ one cell.
Clearly, by dropping $s'_n$ the whole 
process gets frozen. 

Now, the rightmost cell of the array emits a signal $s_d$ to the left at initial
time. When $s_d$ arrives at the rightmost $c_n$-cell at the same time
as signal $s'_n$, the latter is dropped. Now, signal $s_d$ continues
to move to the left and checks that all signals~$s'_m$ stay 
at the rightmost $c_m$-cell and, 
additionally, that the $c_1$-block has been passed through 
back and forth by $s'_1$ exactly once, which can be remembered by $s'_1$. 
In this case, the number of $d_1$s is exactly
$2\cdot \alpha_1^{i_{j,1}} \cdot 2\cdot \alpha_2^{i_{j,2}}\cdots
2\cdot \alpha_2^{i_{j,n}} = 2^n \cdot \alpha_1^{i_{j,1}} \cdots \alpha_n^{i_{j,n}}$.
A schematic computation may be found in Figure~\ref{fig:product}.

Now, we can construct a real-time CA accepting $L(t_j)$ by realizing both tasks on 
different tracks and checking the correctness of the input format.
\end{proof}

\begin{figure}[ht!]
\centering
\includegraphics[scale=.7]{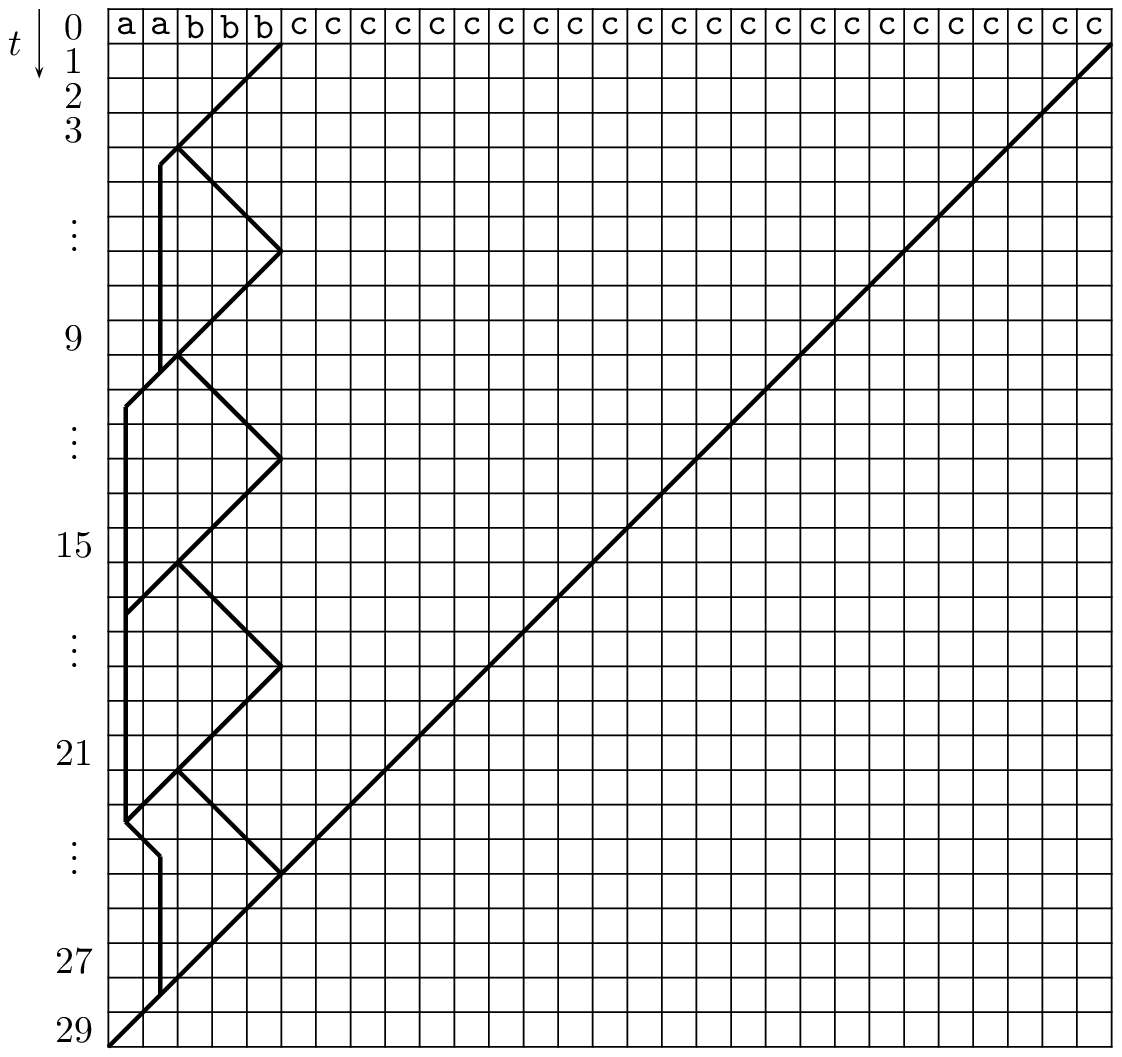}
\caption{Schematic computation of $a^2 b^3 c^{2^2 \cdot 2 \cdot 3}=a^2 b^3 c^{24}$.}
\label{fig:product}
\end{figure}

We next consider the following regular languages $R_j$ depending on the sign of 
$t_j$.
If $s_j=1$, then $R_j=b_{1}^*\cdots b_{n}^* c_{1}^*\cdots c_{n}^* d_1^* \cent$. 
If $s_j=-1$ and $t_j$ is non-constant, then 
$R_j=b_{1}^*\cdots b_{n}^* c_{1}^*\cdots c_{n}^* d_2^* \cent$. 
Otherwise, we set $R_j =d_2^* \cent$.
Then, we define for positive terms $t_j$
\begin{multline*}
\tilde{L}(t_j)=\{\,
a_1^{\,\alpha_1} \cdots a_n^{\alpha_n} w_1 \cdots w_{j-1}
b_1^{\alpha_1} \cdots b_n^{\alpha_n} c_1^{\alpha_1^{i_{j_1}}} 
\cdots c_n^{\alpha_n^{i_{j_n}}} 
d_1^{2^n \cdot \alpha_1^{i_{j_1}} \cdots \alpha_n^{i_{j_n}}}\cent 
w_{j+1} \cdots w_r \ \mid \alpha_1,\dots,\alpha_n 
\ge 0\\ \textrm{ and } w_i \in R_i \textrm{ for } 1 \le i \le r \textrm{ with } i \neq j
\,\}.
\end{multline*}
The languages $\tilde{L}(t_j)$ for negative, non-constant terms are defined 
analogously.  For negative, constant terms we define
\begin{multline*}
\tilde{L}(t_j)=\{\,
a_1^{\alpha_1} \cdots a_n^{\alpha_n} w_1 \cdots w_{j-1}
d_2^{2^n}\cent w_{j+1} \cdots w_r \ \mid \alpha_1,\dots,\alpha_n 
\ge 0\\ \textrm{ and } w_i \in R_i \textrm{ for } 1 \le i \le r \textrm{ with } i \neq j
\,\}.
\end{multline*}
Now, we consider the language $\tilde{L}(p)=\bigcap_{i=1}^r \tilde{L}(t_j)$
and observe that $\tilde{L}(t_j)$ and, thus, $\tilde{L}(p)$ are still bounded 
languages (up to a renaming of symbols).

\begin{lemma}\label{lemma:ltildep} 
The language $\tilde{L}(p)$ belongs to the family $\rtf(\ca)$.
\end{lemma}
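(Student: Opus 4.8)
The plan is to reduce the claim to two ingredients: that each single factor $\tilde{L}(t_j)$ lies in $\rtf(\ca)$, and that $\rtf(\ca)$ is closed under intersection. The second is well known: finitely many real-time cellular automata can be run in parallel on separate tracks, with the leftmost cell latching the accepting states it has seen and accepting only once every track has accepted, which keeps the whole computation real-time. Since $\tilde{L}(p)=\bigcap_{j=1}^{r}\tilde{L}(t_j)$ and $r$ is a constant fixed by the polynomial $p$, this turns the first ingredient into the lemma. So the real work is to place each $\tilde{L}(t_j)$ in $\rtf(\ca)$.

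For this I would start from the automaton of Lemma~\ref{lemma:ltj} for $L(t_j)$ and adapt it to the richer frame. A word of $\tilde{L}(t_j)$ (non-constant case) consists of a prefix $a_1^{\alpha_1}\cdots a_n^{\alpha_n}$, then $r$ consecutive sub-blocks, where the $i$-th sub-block for $i\neq j$ is an arbitrary word of the regular language $R_i$ and the $j$-th sub-block is the genuine $t_j$-block $b_1^{\beta_1}\cdots b_n^{\beta_n}c_1^{\beta_1^{i_{j,1}}}\cdots c_n^{\beta_n^{i_{j,n}}}d_1^{2^n\cdot\beta_1^{i_{j,1}}\cdots\beta_n^{i_{j,n}}}\cent$, with the additional constraint $\alpha_m=\beta_m$ for $1\le m\le n$ (the constant-term case $d_2^{2^n}\cent$ is handled the same way, even more simply). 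The CA does three things on separate tracks: (i) it verifies the gross format, i.e.\ that the input decomposes as an $a$-block followed by $r$ sub-blocks with the $i$-th in $R_i$ for $i\neq j$ and the $j$-th of shape $b_1^*\cdots b_n^*c_1^*\cdots c_n^*d_1^*\cent$ — a regular condition, checkable by a leftward signal from the rightmost cell, which also plants markers delimiting the $j$-th sub-block; (ii) inside that marked sub-block it runs exactly the construction from the proof of Lemma~\ref{lemma:ltj}, verifying $|w|_{c_m}=\beta_m^{\,i_{j,m}}$ and the $d_1$-count $=2^n\cdot\prod_m\beta_m^{\,i_{j,m}}$; (iii) it checks $\alpha_m=\beta_m$ for all $m$ by matching the $a_m$-block at the left end against the $b_m$-block buried in the $j$-th sub-block, which is a standard real-time count-comparison between two well-separated blocks of a bounded-shaped input. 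Together (i)–(iii) yield a real-time CA for $\tilde{L}(t_j)$.

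The main obstacle is essentially bookkeeping: all of (i)–(iii) must run concurrently in real time, and the Mazoyer-style signals invoked through Lemma~\ref{lemma:ltj} must now be carried out on a sub-array that does not begin at the left boundary of the whole array. Since those are (two-way) CA computations, restricting them to an interval delimited by markers is routine, and since each uses only as many cells as the relevant block is long, the real-time budget suffices exactly as in Lemma~\ref{lemma:ltj}; the left-end $a$-block provides enough slack for the comparison in (iii). Once every $\tilde{L}(t_j)$ is in $\rtf(\ca)$, running the $r$ automata in parallel and accepting only when all have accepted gives $\tilde{L}(p)\in\rtf(\ca)$.
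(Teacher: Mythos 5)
Your overall skeleton is the same as the paper's: reduce the lemma to closure of $\rtf(\ca)$ under intersection, build a real-time CA for each $\tilde{L}(t_j)$ on top of the Lemma~\ref{lemma:ltj} machine, and add a count comparison of the $a_i$-blocks against the $b_i$-blocks that ignores the $b_i$s occurring in the filler words $w_1,\dots,w_{j-1}$. Where you diverge is in how the surrounding words $w_i\in R_i$ are handled: the paper concatenates the regular languages $a_1^*\cdots a_n^*R_1\cdots R_{j-1}$ and $R_{j+1}\cdots R_r$ to $L(t_j)$ and invokes closure of $\rtf(\ca)$ under marked concatenation, whereas you embed the Lemma~\ref{lemma:ltj} computation directly into the $j$-th sub-block, delimited by markers planted by a right-to-left format sweep.

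This direct embedding, as you describe it, has a timing gap. The filler blocks for non-constant terms are built from the very same symbols $b_m,c_m,d_1$ (respectively $d_2$) and $\cent$, so a cell cannot tell at time $0$ whether it lies in the $j$-th sub-block; your marking signal reaches that sub-block only after traversing $w_{j+1}\cdots w_r$, i.e.\ after a delay that is unbounded in terms of the sub-block's own length $N_j$. A two-way computation cannot simply be started late along a left-moving wave (under the corresponding time shear, rightward unit-speed signals would have to become instantaneous), and re-synchronizing the sub-block after the wave arrives costs on the order of $2N_j$ further steps, so the verdict would reach the leftmost cell roughly $2N_j$ steps after real time. Thus the sentence ``the real-time budget suffices exactly as in Lemma~\ref{lemma:ltj}'' is precisely the step that needs an argument. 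The repair is easy and should be made explicit: run the Lemma~\ref{lemma:ltj} check speculatively, from time $0$, in \emph{every} $\cent$-delimited block (the block boundaries are locally visible within one step), and let the block index --- a finite amount of information obtained by counting $\cent$s, since $r$ is a constant fixed by $p$ --- decide which block's verdict is the relevant one; the same index information attached to the leftward-moving $b_i$s also realizes your step (iii). This speculative execution is in effect what the paper's appeal to closure under marked concatenation packages away.
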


\begin{proof} 
Since $\rtf(\ca)$ is closed under intersection, we have to show that each
language $\tilde{L}(t_j)$ belongs to $\rtf(\ca)$. First, we can observe that
$\tilde{L}(t_j)$ is a regular language and, thus, is accepted by a real-time CA, if
$t_j$ is a negative, constant term. We next present the construction for $t_j$
being a non-constant term. 

Due to Lemma~\ref{lemma:ltj} we can construct a real-time CA accepting $L(t_j)$.
Then we concatenate the regular language $a_1^* \cdots a_n^* R_1 \cdots R_{j-1}$
on the left and the regular language $R_{j+1} \cdots R_{r}$ on the right
of $L(t_j)$. Since the family $\rtf(\ca)$ is closed under 
marked concatenation~\cite{Seidel:1979:LRSCA}, we again obtain
a real-time $\ca$ language.
It remains to be shown that the number of $a_i$s is equal to the number of 
$b_i$s, for $1 \le i \le n$. 
It is known that the language $\{\,a^nb^n \mid n \ge 1\,\}$ can be accepted by a 
real-time CA. The principal idea is to send the $b$s to the left and to match 
them against the $a$s. If both numbers are equal, the input is accepted and 
otherwise rejected. By an obvious generalization of this idea, we can check 
with a real-time CA that the number of $a_i$s is equal to the number of 
$b_i$s.
Additionally, we have to take care that $b_i$s occurring in $w_1, \ldots, w_{j-1}$
are not matched against $a_i$s. To this end, the cells carrying $a_i$s are
ignoring the first $j-1$ blocks of $b_i$s. Altogether, language $\tilde{L}(t_j)$ can
be accepted by a real-time CA.
\end{proof}

Finally, let $L(p)=\{\,w \in \tilde{L}(p) \mid |w|_{d_1}=|w|_{d_2}\,\}$.

\begin{lemma}\label{lemma:lp}
The language $L(p)$ belongs to the family $\rtf(\ca)$.
\end{lemma}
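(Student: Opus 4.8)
The plan is to build on Lemma~\ref{lemma:ltildep}, which already gives a real-time $\ca$ for $\tilde{L}(p)$, and add one more track that verifies the constraint $|w|_{d_1}=|w|_{d_2}$. Since $\rtf(\ca)$ is closed under intersection, it suffices to exhibit a real-time $\ca$ accepting the language of all correctly formatted words $w$ over the relevant alphabet with $|w|_{d_1}=|w|_{d_2}$, and then intersect with the machine from Lemma~\ref{lemma:ltildep}. So I would reduce the problem to the following counting task: given an input in which the $d_1$-symbols occur in several blocks scattered among the $\tilde{L}(p)$-layout and likewise the $d_2$-symbols, check that the total number of $d_1$s equals the total number of $d_2$s.

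First I would observe that this is again an obvious generalization of the standard real-time $\ca$ for $\{a^nb^n\mid n\ge 1\}$: that construction sends the right block leftward and matches it cell-by-cell against the left block. Here, each maximal $d_1$-block emits a leftward "credit" signal stream and each $d_2$-block emits a leftward "debit" stream; a single designated cell (say the leftmost, or a fixed marker cell placed by a preprocessing signal) maintains the running difference and accepts iff the difference is zero at the end. The key point making this fit in real time is that all the $d_1$- and $d_2$-blocks lie to the right of the $a$-cells (the $a$-block carries no $d$-symbols), so all signals travel in the same direction with maximum speed and reach the accounting cell within $n$ steps; no block needs to wait for another. Because the number of blocks is a constant $r$ depending only on $p$, the finite-state control can keep track of which block a given signal came from and whether it should be counted as $d_1$ or $d_2$.

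The one subtlety — and the step I expect to be the main obstacle — is arranging the arithmetic so that the accounting cell never has to store an unbounded count: a naive "count all $d_1$s, then count all $d_2$s, then compare" does not work in real time. The fix is the same as in the $a^nb^n$ construction: process the two streams interleaved, cancelling a $d_1$-token against a $d_2$-token as soon as both are available, so that at every moment the cell only remembers the sign of the current imbalance plus a bounded offset coming from the gaps between blocks; the number and sizes of those gaps are governed by the fixed format, hence bounded. Equivalently, one can place the comparison on two synchronized tracks that shift the $d_1$-region and the $d_2$-region toward each other at equal speed and detect whether they annihilate exactly. I would carry this out, note that correctness of the format has already been checked (as in Lemma~\ref{lemma:ltj}), and conclude by intersecting with the $\ca$ of Lemma~\ref{lemma:ltildep}, invoking closure of $\rtf(\ca)$ under intersection to obtain $L(p)\in\rtf(\ca)$.
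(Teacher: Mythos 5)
Your overall reduction is the same as the paper's: intersect the machine from Lemma~\ref{lemma:ltildep} with a real-time $\ca$ checking $|w|_{d_1}=|w|_{d_2}$, using closure of $\rtf(\ca)$ under intersection. The gap is in the mechanism you propose for the counting check. A single designated ``accounting cell'' that maintains the running difference between incoming $d_1$-credits and $d_2$-debits cannot work: in words of $\tilde{L}(p)$ the terms are ordered by sign, so \emph{all} $d_1$-blocks lie strictly to the left of \emph{all} $d_2$-blocks. Leftward maximum-speed signals therefore arrive at your accounting cell sorted by origin --- the entire $d_1$-stream before any $d_2$-signal --- so the intermediate imbalance reaches $|w|_{d_1}$, which is unbounded, and a finite-state cell cannot store it. Your claimed fix is also unsound: what is bounded by the fixed format is only the \emph{number} of blocks ($r$, a constant depending on $p$), not their lengths, so ``a bounded offset coming from the gaps between blocks'' is false --- the blocks $w_i\in R_i$ are arbitrarily long, and more importantly the streams simply never interleave, so no cancellation-as-you-go keeps the counter small.

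The correct route --- which is what the paper does --- exploits precisely the structural fact you did not use: every word of $\tilde{L}(p)$ splits into a left part containing all $d_1$s and no $d_2$s and a right part containing all $d_2$s and no $d_1$s. Hence the comparison is literally the $\{a^nb^n\}$ construction in distributed form: ship the $d_2$-symbols leftward and match each one against a not-yet-marked $d_1$-cell, with intervening non-$d$ cells just forwarding; no single cell ever holds more than a constant amount of information, and the check completes in real time. Your parenthetical alternative (``two synchronized tracks that shift the two regions toward each other and detect exact annihilation'') is essentially this distributed matching, but it is not equivalent to your primary accounting-cell scheme, and to justify it you still need the observation about the ordering of the $d_1$- and $d_2$-blocks, which your write-up treats as if they were scattered arbitrarily. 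With that observation made explicit and the single-counter idea dropped, the proof goes through as in the paper.
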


\begin{proof}
Here we have to check that the input belongs to $\tilde{L}(p)$ and that the number
of occurring symbols $d_1$ is equal to the number of occurring symbols $d_2$.
The former task can be realized by some real-time CA due to Lemma~\ref{lemma:ltildep}.
The latter task can also be realized by some real-time CA: Due to the format of words in
$\tilde{L}(p)$ we know that each word can be divided into two parts. The first
part contains symbols $d_1$ and no symbols $d_2$ whereas the second part contains
symbols $d_2$ and no symbols $d_1$. By sending symbols $d_2$ to the left and
matching them against symbols $d_1$, we can check that their number is equal.
\end{proof}

\begin{lemma}\label{lemma:lp1}
The language $L_1(p)=\{\,w a^{|w|}b^{2^{|w|}} \mid w \in L(p)^R\,\}$ 
belongs to the family $\rtf(\scoca{n})$.
\end{lemma}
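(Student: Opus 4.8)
The plan is to take the known real-time CA for $L(p)$ from Lemma~\ref{lemma:lp} and show that the additional suffix $a^{|w|}b^{2^{|w|}}$ can be verified by a one-way device whose total communication is linear in the input length. First I would recall the standard simulation of a real-time two-way CA by a linear-time OCA on reversed input (as used in Example~\ref{exmpl:2to:the:n}): each cell collects the states of its two right neighbors in one extra step and thereby simulates the right neighbor of the two-way device, so $n$ two-way steps become $2n$ one-way steps. The crucial observation is that in this simulation the total number of link uses is bounded by the number of cell-steps, hence $O(n^2)$ in the worst case; but for our purposes we will arrange that only a linear number of communications actually occur, by running the $L(p)$-recognizer on the $w$-part (reversed) while the long $a$- and $b$-blocks to the right serve mostly as passive "time-supply" cells that communicate only $O(1)$ times each on average, or rather so that the sum is $O(n)$ where $n$ is the full input length $|w|+|w|+2^{|w|}$.

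Concretely, I would design the $\scoca{n}$ for $L_1(p)$ so that on input $u = w\, a^{|w|} b^{2^{|w|}}$ the following happens on separate tracks. On one track, a maximum-speed signal from the rightmost cell sweeps left to verify the global format and to trigger, in the $a$-block, the exponential-signal construction of~\cite{Mazoyer:1994:SODCA}: the $a$-block has length $\ell$ and we check that the $b$-block has length exactly $2^{\ell}$, which costs communication linear in $\ell + 2^{\ell}$, i.e. linear in $|u|$. On a second track, we check $\ell = |w|$ by matching the $a$-cells against the $w$-cells with a leftward counting signal — again linear total communication. On a third track we simulate, via the reversal trick, the real-time CA for $L(p)$ acting on $w^R$ (recall $w \in L(p)^R$ means $w^R \in L(p)$), using the $a$- and $b$-cells only as extra time, so that the simulation finishes within real time of $|u|$; the communications used by this simulation are confined to the first $|w|$ cells over $O(|w|)$ steps, contributing $O(|w|^2)$ — and here is the subtlety — which is $O(|u|)$ only because $|u| \ge 2^{|w|} \gg |w|^2$. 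This last point is exactly why the suffix is exponentially long: it makes $|w|^2$ (and indeed any polynomial in $|w|$) linear in $|u|$, so the whole computation is genuinely sum-communication bounded by $O(|u|)$.

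The main obstacle I anticipate is bookkeeping the communication budget carefully rather than any conceptual difficulty: one must check that every auxiliary signal used (format check, $a$-vs-$w$ matching, the back-and-forth exponential signals, and the reversal simulation of the $L(p)$-machine) either touches each link only $O(1)$ times or is confined to a region whose total cell-step count is $O(|u|)$. The format/matching signals are single sweeps, hence fine; the $2^{\ell}$-construction of~\cite{Mazoyer:1994:SODCA} uses only $O(2^{\ell})$ cells and $O(2^{\ell})$ time with finitely many communications per cell — wait, that is not automatic, so instead I would argue the cruder bound: its total cell-step count is $O((2^{\ell})^2)$?? No — better to invoke that the construction runs in time $O(2^{\ell})$ on $O(2^{\ell})$ cells, giving $O(4^{\ell})$ cell-steps, which is \emph{not} $O(|u|)$. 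To avoid this, I would instead verify $|b\text{-block}| = 2^{|a\text{-block}|}$ by a different route: reuse the already-available real-time CA machinery for such exponential relations (as in Example~\ref{exmpl:2to:the:n}, where $\{a^{2^n}b^n\}$-type checking is done in time linear in the input and, being a real-time CA computation turned into a linear-time OCA, has communication bounded by the square of the relevant block — again $O(|u|^2)$ not $O(|u|)$). So the honest resolution: accept the $O(|u|^2)$ bound on communications used for the exponential check but note that this is still only $O(\scomm) = O(|u|)$ fails — hence the real fix is that the CA for the exponential relation, like all real-time CA, can be simulated so that \emph{most} links are used only boundedly often; I would defer this to the detailed construction and flag it as the step requiring care. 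In the write-up I would therefore structure the proof as: (1) the reversal simulation gives a linear-time OCA for $L(p)^R$ with communication confined to $|w|$ cells over $O(|w|)$ steps; (2) padding by $a^{|w|}b^{2^{|w|}}$ supplies enough time to do this in real time of $|u|$ and makes that region's $O(|w|^2)$ communications $O(|u|)$; (3) the format, length, and exponential-suffix checks each contribute $O(|u|)$ communications by being essentially one-directional sweeps together with the exponential-signal construction restricted to the padding region; (4) conclude $L_1(p) \in \rtf(\scoca{n})$.
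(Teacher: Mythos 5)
Your overall architecture matches the paper's proof (reversal simulation of the real-time CA for $L(p)$ confined to the first $|w|$ cells and cut off by a freezing signal, so that its $O(|w|^2)$ communications are linear in $|u|$ because $|u|\ge 2^{|w|}$; a single-sweep check that the number of $a$s equals $|w|$), but there is a genuine gap at the step you yourself flag: verifying that the $b$-block has length exactly $2^{|a\text{-block}|}$ within total communication $O(|u|)$. Your attempted routes (the exponential-signal construction of Mazoyer/Terrier, or the machinery of Example~\ref{exmpl:2to:the:n}) are left with bounds you admit do not close, and your fallback claim that ``the CA for the exponential relation, like all real-time CA, can be simulated so that most links are used only boundedly often'' is unjustified -- indeed, if every real-time CA admitted such a sparse-communication simulation, the communication-bounded classes studied in this paper would trivialize, so this cannot be waved through and is exactly the point where a new construction is needed.

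The paper closes this gap with a specific device for $\{\,a^m b^{2^m}\mid m\ge 1\,\}$: implement a binary counter on the $a$-cells (least significant bit in the rightmost $a$-cell), incremented at every time step, where the only information ever communicated leftwards is a carry-over. By the standard amortized analysis of binary counting, the rightmost $a$-cell communicates $2^{m-1}$ times, the next $2^{m-2}$ times, and so on, for a total of $\sum_{i=1}^{m}2^{i-1}\in O(2^m)$ communications, which is linear in the input length $m+2^m$; the $b$-cells contribute only a constant number of communications each, since the check that exactly $2^m$ steps have elapsed is done by one maximum-speed signal emitted from the rightmost $b$-cell that, upon reaching the $a$-block, verifies that every $a$-cell it passes has already produced a carry-over and accepts at the first cell that is in a carry-over state for the first time. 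Without this (or an equivalent) counter construction, your proof does not establish the $\scoca{n}$ bound, since the exponential-length check is precisely the part of $L_1(p)$ that cannot be handled by generic simulations of real-time CAs.
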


\begin{proof}
Let us first give evidence that the language $\{\,a^mb^{2^m} \mid m \ge 1\,\}$ 
belongs to $\rtf(\scoca{n})$.
To this end, we implement a binary counter in the $a$-cells, i.e, we store the 
binary encoding of the currently counted value in these cells. The least significant bit 
is simulated in the rightmost $a$-cell. The information to be communicated to
the left are carry-overs. Now, the counter is increased at every time step.
Furthermore, the rightmost $b$-cell emits a signal to the left at initial time.
When this signal arrives at the rightmost $a$-cell, it checks successively 
whether all $a$-cells passed through are in a state indicating that they have produced 
a carry-over before. If it arrives in an $a$-cell that is in a carry-over
state for the first time, that cell enters an accepting state.
An example computation is depicted in Figure~\ref{fig:bincount}. 

\begin{figure}[ht]
\centering
\includegraphics[scale=.7]{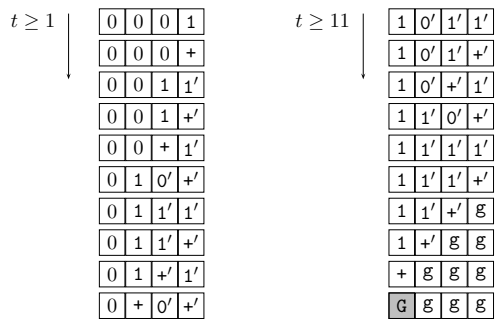}
\caption{A binary counter accepting $b^{2^4}=b^{16}$.
A $\texttt{+}$ denotes a carry-over and some primed state indicates that the cell has 
produced a carry-over at some time before. The latter is checked by signal
$g$, where $G$ indicates an accepting state.}
\label{fig:bincount}
\end{figure}

Let us consider an input $a^mb^{2^m}$. The rightmost $a$-cell performs $2^{m-1}$
communication steps to send the carry-overs during the counting phase. 
Its left neighbor performs $2^{m-2}$ communication steps
and so on. Thus, we obtain that all $a$-cells perform 
not more than $\sum_{i=1}^m 2^{i-1} \in O(2^m)$ communication steps.
Furthermore, the $b$-cells perform only a constant number of communication steps.
Altogether, $O(2^m)$ communication steps are performed in total,
and an $\scoca{n}$ can be constructed.

Next, we consider the language $\{\,w a^{|w|} \mid w \in L(p)^R\,\}$. The correct
number of $a$s can be checked in a similar way as for the language 
$\{\,a^mb^m \mid m \ge 1\,\}$. This check costs a constant
number of communication steps per cell.
In the proof of Lemma~\ref{lemma:lp} a 
real-time CA is constructed which accepts $L(p)$. 
It is known that an OCA accepting $L(p)^R$ in twice the time can be 
constructed~\cite{Choffrut:1984:RTCATA,kutrib:2008:ca-cpv,Umeo:1982:DOWSTWRTCARP}.
We implement this construction in order to check the $w$. By concatenating
the additional symbols $a$, the $\oca$ works still in real time.
The rightmost $a$-cell additionally sends a signal which
freezes the computation in the first $|w|$ cells.

Now, it is easy to construct a real-time OCA accepting $L_1(p)$. To conclude the
proof we have to show that the real-time OCA constructed is a real-time $\scoca{n}$.
The $b$-cells perform a constant number of communication steps per cell. The
$a$-cells perform $O(2^m)$ communication steps to realize the counter. Additionally,
a constant number of communications per cell is needed to check the length of $w$
with the number of $a$s and to send the freezing signal. Finally, each of the first $|w|$
cells can perform at most~$2m$ communication steps due to the freezing
signal. In total, they perform at most $2m^2$ communication steps. Altogether,
we obtain that at most $O(2^m)$ communication steps are performed which shows
that $L_1(p) \in \rtf(\scoca{n})$.
\end{proof}

\begin{lemma}\label{lemma:lp2}
The language $L_2(p)=\{\,w a^{|w|}b^{2^{|w|}}c^{|w|} \mid w \in L(p)^R \,\}$ 
belongs to the family $$\rtf(\mcoca{\log n}).$$
\end{lemma}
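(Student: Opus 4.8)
The plan is to build a real-time $\oca$ that, on input $u=w\,a^{m}b^{2^{m}}c^{m}$ with $m:=|w|$, runs three checks on separate tracks: that $w\in L(p)^{R}$, that the $a$-block and the $c$-block each have length~$m$, and that the $b$-block has length exactly~$2^{m}$. For the first check I reuse the construction from the proof of Lemma~\ref{lemma:lp1}: by Lemma~\ref{lemma:lp} there is a real-time $\ca$ for $L(p)$, hence an $\oca$ simulating it in reverse in time $2m$; we run it on the first $m$ cells and afterwards freeze that track. This costs $O(m)$ communications on each of these cells. For the length matching I send one left-moving signal out of every $a$-cell and every $c$-cell and count them against the $m$ cells carrying~$w$; each pair of cells is crossed by at most $m$ such signals, so this costs $O(m)$ communications per pair as well. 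The new ingredient, and the reason for the suffix $c^{m}$, is the verification that the $b$-block has length $2^{m}$.

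This is done by a \emph{pipelined halving process} on a dedicated track. From time about~$1$ on, the rightmost $b$-cell emits, at every step, a left-moving signal (a \emph{sweep}). A sweep carries a parity bit and, while passing through the $b$-block, deactivates every second still-active $b$-cell it meets, counting from the right, so that when the number of active cells is even the leftmost one survives; it also keeps a saturating count in $\{0,1,\ge 2\}$ of the cells that remain active and a flag recording whether it ever met an odd number of active cells. Thus the $k$-th sweep turns a block of $j$ active cells into one of $\lfloor j/2\rfloor$ active cells, with an error flagged when $j$ is odd and $j>1$; since all sweeps move at speed~$1$ and are created one step apart they never interfere, and on a correctly formed $b$-block the $k$-th sweep meets $2^{m-k+1}$ active cells. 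The emission is stopped — and the last sweep marked as such — by a signal that the rightmost $c$-cell sends leftwards at time about~$1$ and that reaches the rightmost $b$-cell after $|c|$ steps, so exactly $|c|$ sweeps are emitted. The device accepts iff the last sweep meets exactly two active cells (reducing them to one), no sweep ever raised the odd-count flag, and no non-last sweep met only one or exactly two active cells (an error detected by an earlier sweep is recorded by deactivating a cell that the last sweep then reads); the last sweep carries this verdict on to the overall leftmost cell, where it is combined with the outcomes of the other two checks. A block of $j$ $b$-cells survives all $|c|$ halvings exactly and ends with a single active cell precisely when $j=2^{|c|}$, so together with the two length matchings this certifies $u\in L_{2}(p)$.

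It remains to check two things. First, the device is a real-time $\mcoca{\log n}$: on an accepted input we have $|w|=|a|=|c|=m$ and $|b|=2^{m}$, so $n=2^{m}+3m$ and $\log n=\Theta(m)$; every pair of neighbouring cells is crossed by at most $m$ sweeps, at most $2m$ length-matching signals, one halting signal and $O(1)$ format signals, and is involved in at most $2m$ steps of the reverse $L(p)$-simulation, so — letting cells send $\bot$ whenever no signal or simulation step passes — each pair communicates $O(m)=O(\log n)$ times. Note that the construction is not an $\scoca{n}$: the sweeps alone cause $\Theta(m\,2^{m})=\Theta(n\log n)$ communications in total, which is precisely why the linear sum bound of Lemma~\ref{lemma:lp1} forced the different, binary-counter construction there, and why $L_{2}(p)$ here needs the extra $c^{m}$. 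Second, everything fits in real time: the last sweep is emitted around time~$m$ and must traverse the $2^{m}$ $b$-cells, the $m$ $a$-cells and the $m$ cells of~$w$, delivering its verdict to the leftmost cell at time about $3m+2^{m}=n$; the length-matching signals and the reverse simulation finish no later.

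I expect the main obstacle to be precisely this real-time bookkeeping. The $3m$ extra cells contributed by $w$, the $a$-block and the $c$-block give only $\Theta(m)$ steps of slack, so the start of the sweep emission, the halting signal out of the $c$-block, and the marking of the last sweep must all be arranged within a constant number of steps, and the sweep track, the matching track and the simulation track must be kept non-interfering; this is a careful but routine off-by-one analysis rather than a conceptual difficulty. The conceptual core is to replace the single, heavily used least-significant bit of a binary counter by $m$ lightly used sweeps, together with the observation that an $\oca$ can pass information leftwards from the $c$-block into the $b$-block, so that the $c$-block can serve as the external ruler that halts the emission after exactly $m$ sweeps.
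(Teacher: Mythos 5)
Your proposal is correct, and it follows the paper's overall skeleton: reuse the Lemma~\ref{lemma:lp1} machinery (reverse simulation of the $\ca$ for $L(p)$ on the first $|w|$ cells, then freezing; length matching of the $a$- and $c$-blocks by speed-one signals), and isolate as the only delicate point the verification of the $b$-block length with $O(\log n)$ communications per link. Where you genuinely diverge is in how that verification is done. The paper simply invokes the known $(2^m+m)$-time $\oca$ for $\{\,b^{2^m}\mid m\ge 1\,\}$ from~\cite{kutrib:2008:ca-cpv}, whose leftward-moving binary counter of length $m$ bounds each cell's communications by the counter length, and then modifies it to accept $\{\,b^{2^m}c^m\mid m\ge1\,\}$ in real time; you instead give a self-contained pipelined halving sieve, emitting one speed-one sweep per step from the rightmost $b$-cell, halting the emission after exactly $|c|$ sweeps via a signal from the $c$-block, and accepting iff the block survives $|c|$ exact halvings down to a single cell. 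Your analysis of the sieve is sound: consecutive sweeps stay one cell apart, so sweep $k$ reads the post-sweep-$(k-1)$ activity flags, the recurrence $j_{k+1}=\lfloor j_k/2\rfloor$ together with your error conditions (odd count, or a non-last sweep meeting one or two active cells, or the last sweep not meeting exactly two) characterizes $|b|=2^{|c|}$ exactly, and each link in the $b$-block is crossed by at most $|c|=m\in O(\log n)$ sweeps plus $O(m)$ matching signals, with the last sweep reaching cell~$1$ at time about $3m+2^m=n$. One small implementation remark: a sweep can only recognize that it has left the $b$-block upon entering the rightmost $a$-cell (a cell cannot know its left neighbor in an $\oca$), so the per-sweep verdicts are best accumulated in that cell's state and collected by the last sweep; this is consistent with, and slightly cleaner than, your ``deactivate a cell for the last sweep to read'' device. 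The trade-off between the two routes: the paper's citation-based construction is shorter and pushes the timing subtleties into the referenced result, while your sieve makes the role of the $c^{|w|}$ suffix explicit (it is literally the ruler counting the halvings, so the check directly ties $|b|$ to $2^{|c|}$ rather than first checking ``power of two'' and then the exponent), at the price of the off-by-one timing bookkeeping you acknowledge but do not fully carry out — an omission comparable in weight to the paper's own ``it is easy to modify'' step.
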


\begin{proof}
Let us first consider the language $\{\,b^{2^m}c^m \mid m \ge 1\,\}$.
It is shown in~\cite{kutrib:2008:ca-cpv} that the unary language 
$\{\,b^{2^m} \mid m \ge 1\,\}$ can be accepted by a $(2^m+m)$-time OCA. 
The principal idea is to construct a moving binary counter which starts 
with one time step delay in the rightmost cell and moves to the left, 
whereby the cells passed through are counted. 
If necessary, the length of the counter is increased. 
Additionally, the cells passed through check whether all bits are 1.
In this case, some number $2^m-1$ has been counted. 
Taking into account the delayed start we obtain that some number $2^m$ 
has been counted. In this case a cell enters an accepting state. An example
computation is depicted in Figure~\ref{fig:rtlog}. 
Since the counter moves from right to left
and all cells passed through enter an accepting or rejecting permanent
state, we can observe that the number of communication steps of each cell 
is bounded by the length of the counter, that is, by the logarithm of the 
length of the input.

\begin{figure}[ht]
\centering
\includegraphics[scale=.6]{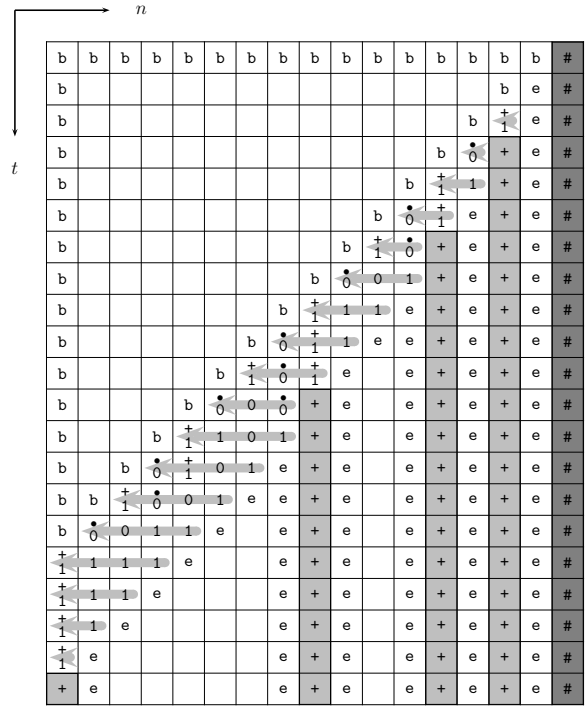}
\caption{Computation of an OCA accepting an input of the 
language \mbox{$\{\,b^{2^m} \mid m \ge 1\,\}$}
in time $2^m+ m$.}
\label{fig:rtlog}
\end{figure}

So, the language $\{\,b^{2^m} \mid m \ge 1\,\}$ is accepted by an 
$\mcoca{\log n}$ which needs more than
real time. It is easy to modify the construction such that 
the language $\{\,b^{2^m}c^m \mid m \ge 1\,\}$
is accepted in real time. 

Next, let us consider the language $\{\,w a^{|w|} \mid w \in L(p)^R \,\}$. 
In the proof of Lemma~\ref{lemma:lp1} an $\scoca{n}$ accepting this 
language is constructed.
So, we obtain that $\{\,w a^{|w|} \mid w \in L(p)^R \,\}$
belongs to $\rtf(\oca)$.

Finally, we concatenate both languages and check the same number of $a$s and~$c$s by
sending $c$s to the left and matching them against $a$s. Altogether, we obtain a
real-time OCA accepting $L_2(p)$. To conclude the proof we have to show that the 
real-time OCA constructed is in fact a real-time $\mcoca{\log n}$.
The length of the input is $3m+2^m$. Thus, we have to show that each cell does not
perform more than $O(m)$ communication steps.
Concerning the $b$-cells not more than $O(m)$ communication steps are performed due
to the construction provided. The matching of $c$s against $a$s 
causes not more than $O(m)$ additional communication steps. Therefore, the number of
communication steps in $b$- and $c$-cells is of order $O(m)$.
The $a$-cells receive $m$ signals from the $c$-cells and send a block of $a$s to
be matched against $w$. Altogether, not more than $O(m)$ communication steps are performed.
Finally, due to the freezing signal the first $|w|$ cells can perform at most 
$2m \in O(m)$ communication steps
after the computation whether $w \in L(p)^R$.
Altogether, we obtain that $L_2(p)$ belongs to $\rtf(\mcoca{\log n})$.%
\end{proof}

Now, we are prepared to derive the undecidability results.

\begin{theorem}
Given an arbitrary real-time $\scoca{n}$ or $\mcoca{\log n}$ $\mathcal{M}$
accepting a bounded language, it is undecidable whether
$L(\mathcal{M})$ is empty.
\end{theorem}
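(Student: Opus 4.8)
The plan is to reduce Hilbert's tenth problem to the emptiness problem for each of the two classes. Given a polynomial $p(x_1,\dots,x_n)$ in the normal form fixed above, I would associate with it the language $L(p)$ of Lemma~\ref{lemma:lp}, and then, depending on the class, the padded language $L_1(p)$ (for $\scoca{n}$) or $L_2(p)$ (for $\mcoca{\log n}$) of Lemmas~\ref{lemma:lp1} and~\ref{lemma:lp2}. All constructions in Lemmas~\ref{lemma:ltj}--\ref{lemma:lp2} are effective, so from $p$ one can effectively build a real-time $\scoca{n}$ accepting $L_1(p)$ and a real-time $\mcoca{\log n}$ accepting $L_2(p)$, and by construction these are bounded languages (up to a renaming of symbols). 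It therefore suffices to prove that $L_1(p)$---equivalently $L_2(p)$, equivalently $L(p)$---is non-empty if and only if $p$ has a non-negative integral root.

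Next I would carry out the combinatorial analysis of $L(p)$. Because $\tilde L(p)=\bigcap_{j=1}^r \tilde L(t_j)$ and the $a$-prefix $a_1^{\alpha_1}\cdots a_n^{\alpha_n}$ occurs in every factor $\tilde L(t_j)$, any word $w\in\tilde L(p)$ determines a single tuple $(\alpha_1,\dots,\alpha_n)$, and each of its blocks is then forced to coincide with the genuine encoding prescribed by $L(t_j)$. Hence the block for a positive term $t_j$ contributes exactly $2^n\,\alpha_1^{i_{j,1}}\cdots\alpha_n^{i_{j,n}}=2^n\, t_j(\alpha_1,\dots,\alpha_n)$ symbols $d_1$, the block for a negative non-constant term $t_j$ contributes $2^n\,|t_j(\alpha_1,\dots,\alpha_n)|$ symbols $d_2$, and each negative constant term contributes $2^n=2^n\,|t_j|$ symbols $d_2$. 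Summing over all terms, $|w|_{d_1}=2^n\sum_{s_j=1}t_j(\alpha_1,\dots,\alpha_n)$ and $|w|_{d_2}=-2^n\sum_{s_j=-1}t_j(\alpha_1,\dots,\alpha_n)$ depend only on this tuple, so the defining condition $|w|_{d_1}=|w|_{d_2}$ of $L(p)$ holds precisely for the tuples with $\sum_{j=1}^r t_j(\alpha_1,\dots,\alpha_n)=p(\alpha_1,\dots,\alpha_n)=0$. Consequently $L(p)\neq\emptyset$ if and only if $p$ has a non-negative integral root; and since the padding maps $w\mapsto w\,a^{|w|}b^{2^{|w|}}$ and $w\mapsto w\,a^{|w|}b^{2^{|w|}}c^{|w|}$ are injective, are applied to $L(p)^R$, and reversal preserves non-emptiness, we likewise get $L_1(p)\neq\emptyset\iff L(p)\neq\emptyset\iff L_2(p)\neq\emptyset$.

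Finally I would conclude: a decision procedure for emptiness of real-time $\scoca{n}$s, respectively $\mcoca{\log n}$s, accepting bounded languages, applied to the machines just constructed, would decide whether an arbitrary polynomial $p$ has a non-negative integral root, contradicting the undecidability of Hilbert's tenth problem. The main obstacle is the bookkeeping in the second step: one must verify that the intersection $\tilde L(p)$ genuinely pins down a common tuple $(\alpha_1,\dots,\alpha_n)$ and forces every block into its intended form, so that the auxiliary $R_i$-factors introduce no spurious solutions, and that the factors $2^n$ are produced uniformly for all kinds of terms---including the constant ones---so that they cancel in the comparison $|w|_{d_1}=|w|_{d_2}$. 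Everything else is immediate from Lemmas~\ref{lemma:lp1} and~\ref{lemma:lp2}.
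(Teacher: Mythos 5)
Your proposal is correct and follows the paper's own argument: the same reduction from Hilbert's tenth problem via the effectively constructible languages $L_1(p)$ and $L_2(p)$ of Lemmas~\ref{lemma:lp1} and~\ref{lemma:lp2}, concluding that emptiness would decide whether $p$ has a non-negative integral root. The only difference is that you spell out the counting argument (the common tuple forced by the intersection and the uniform factor $2^n$, including for constant terms) which the paper compresses into ``it is not difficult to observe''.
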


\begin{proof}
Due to Lemma~\ref{lemma:lp1} we can construct a real-time $\scoca{n}$ $\mathcal{M}$ 
accepting $L_1(p)$. 
By the construction of $L_1(p)$, it is not difficult to observe that $\mathcal{M}$ 
accepts the empty set if and only if $2^n \cdot p(x_1,\ldots,x_n)$
has no solution in the non-negative integers. The latter is true if and
only if $p(x_1,\ldots,x_n)$ has no solution in the non-negative integers.
Since Hilbert's tenth problem is 
undecidable, we obtain
that the emptiness problem for real-time $\scoca{n}$s is undecidable.
The argumentation for $\mcoca{\log n}$ is similar considering $L_2(p)$ and 
Lemma~\ref{lemma:lp2}.
\end{proof}

By standard techniques (see, e.\,g.,~\cite{kutrib:2009:calt}) one can show
the following results.

\begin{theorem}\label{theo:undec}
The problems of testing finiteness, infiniteness, universality,
inclusion, equivalence, regularity, and context-freedom are undecidable for  
arbitrary real-time $\scoca{n}$s and $\mcoca{\log n}$s accepting bounded languages.
\end{theorem}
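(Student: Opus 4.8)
The plan is to derive Theorem~\ref{theo:undec} as a package of corollaries of the preceding emptiness result, using only the fact that the witness languages $L_1(p)$ and $L_2(p)$ are bounded and that their membership in $\rtf(\scoca{n})$ and $\rtf(\mcoca{\log n})$ is effective (Lemmas~\ref{lemma:lp1} and~\ref{lemma:lp2}), together with some easy closure and padding tricks. First I would recall the standard reduction skeleton: each of the listed problems admits a many-one reduction from emptiness (or from its complement) for the same device class, provided the class is effectively closed under the operations invoked and the reductions preserve boundedness. So the work is (i) to pin down which closure properties of $\rtf(\scoca{n})$ and $\rtf(\mcoca{\log n})$ are needed, (ii) to check each one holds constructively on bounded inputs, and (iii) to assemble the individual reductions.

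For \emph{finiteness} and \emph{infiniteness} I would modify the construction of $L_1(p)$ (resp. $L_2(p)$) so that, whenever a solution $(\alpha_1,\dots,\alpha_n)$ of $p$ is found, the machine accepts not a single padded word but an infinite family — e.g. by allowing an arbitrarily long further suffix $\cent\,e^{*}$ over a fresh symbol $e$, which the device simply skips with a constant number of communications per cell, keeping it a bounded language and staying within the same communication bound. Then $L(\mathcal{M})$ is infinite iff $p$ has a root, and finite (indeed empty) otherwise; this yields undecidability of both finiteness and infiniteness. For \emph{universality} I would observe that the complement of a bounded language is not bounded, so the cleanest route is to fix a universal-looking target: take a bounded language $U$ known to be in the class effectively (a regular one suffices, e.g. the full set of correctly formatted words), and build $\mathcal{M}'$ accepting $U \setminus L_1(p)$ or $U \cup L_1(p)$ by running the two machines on separate tracks — using effective closure of $\rtf(\scoca{n})$ under intersection and union with regular sets on bounded inputs — so that $L(\mathcal{M}') = U$ iff $p$ has no root. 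From universality (against a fixed recursive target), \emph{inclusion} and \emph{equivalence} follow immediately by taking one argument to be that fixed target. For \emph{regularity} and \emph{context-freedom} I would exhibit, for each polynomial $p$, a machine whose language is a non-semilinear bounded language (one of the Example~\ref{exmpl:sqrt}/Example~\ref{exmpl:2to:the:n} type, hence neither regular nor context-free) when $p$ has a root, and is a fixed regular language otherwise — again glued together by track-product with the emptiness witness; then the language is regular (equivalently context-free, since a bounded context-free language over this structure is already forced to be semilinear) iff $p$ has no root.

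The main obstacle I anticipate is purely bookkeeping rather than conceptual: verifying that the ad hoc gluing operations (union and intersection with a regular set, marked concatenation with a padding block, skipping a fresh suffix) can be carried out on these devices \emph{without exceeding the communication budget}. The subtle point is that $\scoca{n}$ allows $O(n)$ communications in \emph{total}, so adding a linear-length padding block over which each of its cells communicates even a constant number of times is fine, but one must make sure the glue does not force re-communication across the already-used $|w|$-cell prefix beyond the $O(2^{|w|})$ already spent; the freezing-signal argument from Lemma~\ref{lemma:lp1} handles this, and I would simply invoke it. For $\mcoca{\log n}$ the constraint is per-link, and the corresponding check is that the glue adds only $O(\log n)$ — in fact $O(1)$ — extra communications per link, which it does. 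Because all of this is routine and the techniques are exactly those catalogued in~\cite{kutrib:2009:calt}, I would not grind through the constructions in the paper; a sentence pointing to the standard reductions plus the remark that all the closure steps are communication-cheap on bounded inputs is the appropriate level of detail, which is presumably why the authors wrote ``By standard techniques.''
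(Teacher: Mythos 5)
Your sketches for finiteness/infiniteness (padding so the language is infinite iff nonempty) and for regularity/context-freedom (attach a non-semilinear bounded language to the emptiness witness, so the result is regular/context-free iff $p$ has no root) are indeed the standard reductions the paper alludes to, and the bookkeeping you flag is the right bookkeeping. The genuine gap is in your universality argument, and it then infects your route to inclusion and equivalence. To get $L(\mathcal{M}')=U\setminus L_1(p)$ you need \emph{relative complementation} of $L_1(p)$ inside the regular envelope $U$, not ``closure under intersection and union with regular sets'' (note also that $U\cup L_1(p)=U$ holds unconditionally, so that variant tests nothing). Complementation is exactly the delicate point for these classes: by the paper's definitions, the bounds $\mcomm(w)\leq f(|w|)$ and $\scomm(w)\leq f(|w|)$ are required only on \emph{accepted} inputs. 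Flipping acceptance inside $U$ therefore makes the new machine accept precisely those well-formatted but length-mismatched words on which the constructions of Lemmas~\ref{lemma:lp1} and~\ref{lemma:lp2} were never required to be cheap; for instance, on an input whose $b$-block is short, the simulation of the two-way CA on the $w$-prefix already costs about $|w|$ communications per link and $|w|^2$ in total while $n$ is only linear in $|w|$, which violates both the $\mcoca{\log n}$ and the $\scoca{n}$ budget. So ``swap accepting states and intersect with $U$'' does not yield a machine of the required type; one would have to re-engineer the witness construction so that the expensive phases are launched only after the cheap format/length checks succeed, which is real work, not a closure-property one-liner.

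Independently of how universality is settled, you should not route inclusion and equivalence through it: both follow directly from the emptiness result, since a fixed machine of the class accepting the empty (bounded) language exists, and $L(\mathcal{M})\subseteq\emptyset$ as well as $L(\mathcal{M})=\emptyset$ hold iff $L(\mathcal{M})$ is empty; this is the standard reduction and avoids the problematic complementation entirely. Finally, for regularity/context-freedom, ``track-product'' (intersection) is not quite the right glue, since the hard language and the witness live over different formats; marked concatenation $L_1(p)\,\border\,L_{\mathrm{hard}}$ (with the same per-link/total communication accounting you already describe, and with non-regularity/non-context-freedom extracted via intersection with $u_0\border\Sigma^*$ for a fixed witness $u_0$) is the clean way to make your iff-statement precise.
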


\bibliographystyle{eptcs}
\bibliography{kutrib}

\begin{thebibliography}{10}
\providecommand{\bibitemstart}[1]{\bibitem{#1}}
\providecommand{\bibitemend}{}
\providecommand{\bibliographystart}{}
\providecommand{\bibliographyend}{}
\providecommand{\url}[1]{\texttt{#1}}
\providecommand{\urlprefix}{Available at }
\providecommand{\bibinfo}[2]{#2}
\bibliographystart

\bibitemstart{Book:1974:tlcc}
\bibinfo{author}{R.~V. Book} (\bibinfo{year}{1974}):
  \emph{\bibinfo{title}{Tally languages and complexity classes}}.
\newblock {\sl \bibinfo{journal}{Inform. Control}} \bibinfo{volume}{26}, pp.
  \bibinfo{pages}{186--193}.
\bibitemend

\bibitemstart{Choffrut:1984:RTCATA}
\bibinfo{author}{C.~Choffrut} \& \bibinfo{author}{K.~{\v{C}}ulik II}
  (\bibinfo{year}{1984}): \emph{\bibinfo{title}{On real-time cellular automata
  and trellis automata}}.
\newblock {\sl \bibinfo{journal}{Acta Inform.}} \bibinfo{volume}{21}, pp.
  \bibinfo{pages}{393--407}.
\bibitemend

\bibitemstart{Chrobak:1986:FAUL}
\bibinfo{author}{M.~Chrobak} (\bibinfo{year}{1986}):
  \emph{\bibinfo{title}{Finite automata and unary languages}}.
\newblock {\sl \bibinfo{journal}{Theoret. Comput. Sci.}} \bibinfo{volume}{47},
  pp. \bibinfo{pages}{149--158}.
\bibitemend

\bibitemstart{csuhajvarju:1994:gs}
\bibinfo{author}{E.~Csuhaj-Varj{\'u}}, \bibinfo{author}{J.~Dassow},
  \bibinfo{author}{J.~Kelemen} \& \bibinfo{author}{G.~P{\u{a}}un}
  (\bibinfo{year}{1984}): \emph{\bibinfo{title}{Grammar Systems: A Grammatical
  Approach to Distribution and Cooperation}}.
\newblock \bibinfo{publisher}{Gordon and Breach}.
\bibitemend

\bibitemstart{Ginsburg:1966:MTCFL}
\bibinfo{author}{S.~Ginsburg} (\bibinfo{year}{1966}): \emph{\bibinfo{title}{The
  Mathematical Theory of Context-Free Languages}}.
\newblock \bibinfo{publisher}{McGraw Hill}.
\bibitemend

\bibitemstart{ibarra:1970:sml}
\bibinfo{author}{O.~H. Ibarra} (\bibinfo{year}{1970}):
  \emph{\bibinfo{title}{Simple matrix languages}}.
\newblock {\sl \bibinfo{journal}{Inform. Control}} \bibinfo{volume}{17}, pp.
  \bibinfo{pages}{359--394}.
\bibitemend

\bibitemstart{ibarra:1974:nslsbrmhpda}
\bibinfo{author}{O.~H. Ibarra} (\bibinfo{year}{1974}): \emph{\bibinfo{title}{A
  note on semilinear sets and bounded-reversal multihead pushdown automata}}.
\newblock {\sl \bibinfo{journal}{Inform. Process. Lett.}} \bibinfo{volume}{3},
  pp. \bibinfo{pages}{25--28}.
\bibitemend

\bibitemstart{ibarra:1978:rbmcmdp}
\bibinfo{author}{O.~H. Ibarra} (\bibinfo{year}{1978}):
  \emph{\bibinfo{title}{Reversal-bounded multicounter machines and their
  decision problems}}.
\newblock {\sl \bibinfo{journal}{J. ACM}} \bibinfo{volume}{25}, pp.
  \bibinfo{pages}{116--133}.
\bibitemend

\bibitemstart{klein:2007:cdul}
\bibinfo{author}{A.~Klein} \& \bibinfo{author}{M.~Kutrib}
  (\bibinfo{year}{2007}): \emph{\bibinfo{title}{Cellular devices and unary
  languages}}.
\newblock {\sl \bibinfo{journal}{Fund. Inform.}} \bibinfo{volume}{78}, pp.
  \bibinfo{pages}{343--368}.
\bibitemend

\bibitemstart{kutrib:2008:ca-cpv}
\bibinfo{author}{M.~Kutrib} (\bibinfo{year}{2008}):
  \emph{\bibinfo{title}{Cellular automata -- a computational point of view}}.
\newblock In: {\sl \bibinfo{booktitle}{New Developments in Formal Languages and
  Applications}}. \bibinfo{publisher}{Springer}, pp. \bibinfo{pages}{183--227}.
\bibitemend

\bibitemstart{kutrib:2009:calt}
\bibinfo{author}{M.~Kutrib} (\bibinfo{year}{2009}):
  \emph{\bibinfo{title}{Cellular automata and language theory}}.
\newblock In: {\sl \bibinfo{booktitle}{Encyclopedia of Complexity and System
  Science}}. \bibinfo{publisher}{Springer}.
\bibitemend

\bibitemstart{Kutrib:2006:fcaricccc}
\bibinfo{author}{M.~Kutrib} \& \bibinfo{author}{A.~Malcher}
  (\bibinfo{year}{2006}): \emph{\bibinfo{title}{Fast cellular automata with
  restricted inter-cell communication: Computational capacity}}.
\newblock In: {\sl \bibinfo{booktitle}{{Theoretical Computer Science (IFIP
  TCS2006)}}}, {\sl \bibinfo{series}{IFIP}} \bibinfo{volume}{209}.
  \bibinfo{publisher}{Springer}, pp. \bibinfo{pages}{151--164}.
\bibitemend

\bibitemstart{Kutrib:2009:cssc:proc}
\bibinfo{author}{M.~Kutrib} \& \bibinfo{author}{A.~Malcher}
  (\bibinfo{year}{2009}): \emph{\bibinfo{title}{Cellular automata with sparse
  communication}}.
\newblock In: {\sl \bibinfo{booktitle}{Conference on Implementation and
  Applications of Automata (CIAA 2009)}}, {\sl \bibinfo{series}{LNCS}}
  \bibinfo{volume}{5642}. \bibinfo{publisher}{Springer}, pp.
  \bibinfo{pages}{34--43}.
\bibitemend

\bibitemstart{Malcher:2002:dccadq}
\bibinfo{author}{A.~Malcher} (\bibinfo{year}{2002}):
  \emph{\bibinfo{title}{Descriptional complexity of cellular automata and
  decidability questions}}.
\newblock {\sl \bibinfo{journal}{J. Autom., Lang. Comb.}} \bibinfo{volume}{7},
  pp. \bibinfo{pages}{549--560}.
\bibitemend

\bibitemstart{Mazoyer:1994:SODCA}
\bibinfo{author}{J.~Mazoyer} \& \bibinfo{author}{V.~Terrier}
  (\bibinfo{year}{1999}): \emph{\bibinfo{title}{Signals in one-dimensional
  cellular automata}}.
\newblock {\sl \bibinfo{journal}{Theoret. Comput. Sci.}} \bibinfo{volume}{217},
  pp. \bibinfo{pages}{53--80}.
\bibitemend

\bibitemstart{Mereghetti:2001:osbua}
\bibinfo{author}{C.~Mereghetti} \& \bibinfo{author}{G.~Pighizzini}
  (\bibinfo{year}{2001}): \emph{\bibinfo{title}{Optimal simulations between
  unary automata}}.
\newblock {\sl \bibinfo{journal}{SIAM J. Comput.}} \bibinfo{volume}{30}, pp.
  \bibinfo{pages}{1976--1992}.
\bibitemend

\bibitemstart{Pighizzini:2002:uloscjf}
\bibinfo{author}{G.~Pighizzini} \& \bibinfo{author}{J.~O. Shallit}
  (\bibinfo{year}{2002}): \emph{\bibinfo{title}{Unary language operations,
  state complexity and {J}acobsthal's function}}.
\newblock {\sl \bibinfo{journal}{Int. J. Found. Comput. Sci.}}
  \bibinfo{volume}{13}, pp. \bibinfo{pages}{145--159}.
\bibitemend

\bibitemstart{Seidel:1979:LRSCA}
\bibinfo{author}{S.~R. Seidel} (\bibinfo{year}{1979}):
  \emph{\bibinfo{title}{Language recognition and the synchronization of
  cellular automata}}.
\newblock \bibinfo{type}{Technical Report} \bibinfo{number}{79-02},
  \bibinfo{institution}{Department of Computer Science, University of Iowa,
  Iowa City}.
\bibitemend

\bibitemstart{Umeo:2001:ltrcbi}
\bibinfo{author}{H.~Umeo} (\bibinfo{year}{2001}):
  \emph{\bibinfo{title}{Linear-time recognition of connectivity of binary
  images on 1-bit inter-cell communication cellular automaton}}.
\newblock {\sl \bibinfo{journal}{Parallel Comput.}} \bibinfo{volume}{27}, pp.
  \bibinfo{pages}{587--599}.
\bibitemend

\bibitemstart{Umeo:2003:rtgpob}
\bibinfo{author}{H.~Umeo} \& \bibinfo{author}{N.~Kamikawa}
  (\bibinfo{year}{2003}): \emph{\bibinfo{title}{Real-time generation of primes
  by a 1-bit-communication cellular automaton}}.
\newblock {\sl \bibinfo{journal}{Fund. Inform.}} \bibinfo{volume}{58}, pp.
  \bibinfo{pages}{421--435}.
\bibitemend

\bibitemstart{Umeo:1982:DOWSTWRTCARP}
\bibinfo{author}{H.~Umeo}, \bibinfo{author}{K.~Morita} \&
  \bibinfo{author}{K.~Sugata} (\bibinfo{year}{1982}):
  \emph{\bibinfo{title}{Deterministic one-way simulation of two-way real-time
  cellular automata and its related problems}}.
\newblock {\sl \bibinfo{journal}{Inform. Process. Lett.}} \bibinfo{volume}{14},
  pp. \bibinfo{pages}{158--161}.
\bibitemend

\bibitemstart{Vollmar:1981:CAFN}
\bibinfo{author}{R.~Vollmar} (\bibinfo{year}{1981}): \emph{\bibinfo{title}{On
  cellular automata with a finite number of state changes}}.
\newblock {\sl \bibinfo{journal}{Computing}} \bibinfo{volume}{3}, pp.
  \bibinfo{pages}{181--191}.
\bibitemend

\bibitemstart{Worsch:2000ltlrcarc}
\bibinfo{author}{T.~Worsch} (\bibinfo{year}{2000}):
  \emph{\bibinfo{title}{Linear time language recognition on cellular automata
  with restricted communication}}.
\newblock In: {\sl \bibinfo{booktitle}{Theoretical Informatics (LATIN 2000)}},
  {\sl \bibinfo{series}{LNCS}} \bibinfo{volume}{1776}.
  \bibinfo{publisher}{Springer}, pp. \bibinfo{pages}{417--426}.
\bibitemend

\bibliographyend
\end{thebibliography}

\end{document}